\theoremstyle{definition}
\newtheorem{theorem}{Theorem}[section]
\newtheorem{definition}[theorem]{Definition}
\begin{document}

\title{Guessing Games} 
\author{Anthony Mendes and Kent E. Morrison}
\date{}
\maketitle

\begin{abstract} 
In a guessing game, players guess the value of a random real number selected using some probability 
density function.  The winner may be determined in various ways; for example, a winner can be a  
player whose guess is closest in magnitude to the target or a winner can be a player coming 
closest without guessing higher than the target.  We study optimal strategies for players in these 
games and determine some of them for two, three, and four players.
\end{abstract}

\section{Introduction and Background.}

There are situations in everyday life in which people try to outguess one another.  Visitors to the 
county fair can be asked to guess the number of jellybeans in a jar.  In a sealed-bid auction, it
might be desirable to bid closest to the value of a some object without guessing too high.  
Contestants on the American television program 
``The Price Is Right'' try to make better guesses on the price of household goods than their opponents.  
Assuming rational and intelligent guessers, what does the mathematics say should a person do? 

A collection of $n$ players guess the value of $r$, a random number selected using 
some cumulative distribution function $G$ known to all.  How should each player guess if the 
winning guess is the closest without going higher than $r$?  How should each player guess if
the winning guess is simply the closest guess to $r$?  We provide techniques to find solutions to 
these games and exhibit explicit solutions for small values of $n$.  

The goal is to find a single optimal strategy $S$ such that a Nash equilibrium results when 
each player employs $S$.  In other words, if player $i$ knew that all of his opponents 
were using strategy $S$, then the expected payoff to player $i$ is maximized when player $i$ uses 
strategy $S$ as well.  For our guessing games, a strategy is a cumulative distribution function $F$
such that $F(x)$ is the probability of guessing a real number less than or equal to $x$.

Under the assumption that the random number $r$ is a continuous random variable, the general problem can 
be reduced to the case that $r$ is selected uniformly in $[0,1]$.  The cumulative distribution function $G$ 
is strictly increasing on the range of $r$ and so $G^{-1}$ exists.  If $F(x)$ is an optimal 
strategy when $r$ is selected uniformly from $[0,1]$, then $F(G(x))$ is an optimal strategy for the more 
general case.  This is because
\begin{align*}
F(G(x)) 
& = \text{the probability that a guess for $r$ is $\leq G(x)$} \\
& = \text{the probability that a guess for $G^{-1}(r)$ is $\leq x$},
\end{align*}
providing a cumulative distribution function.  So we restrict attention to when 
$r$ is selected uniformly in $[0,1]$.  We will assume that the players choose numbers 
in $[0,1]$ since there is nothing to be gained from other choices.

In section \ref{pir} we analyze the game in which the winner makes the guess closest to, but not larger than, 
$r$. In section \ref{closest} we analyze the game in which the winner is simply the player whose guess is 
closest to $r$.  The paper ends with some remarks about numerical approximations in section \ref{approx}.  This 
introductory section concludes by clarifying what we mean by a solution in an $n$ person zero-sum game 
and by proving two theorems about solutions.

Representing the simultaneous choices of the $n$ players as a vector $\mathbf{x} = (x_1,x_2,\ldots,x_n)$ in 
$[0,1]^n$, we denote the payoff to player $i$ by $A_i(\mathbf{x})$. This is the expected return to player $i$ 
when the target number $r$ is distributed uniformly in $[0,1]$. If each player pays one unit to play and the 
winner takes all, then the game is \emph{zero-sum}, i.e., $\sum_i A_i(\mathbf{x}) =0$ for all $\mathbf{x}$. 
(With ties the pot is split equally.)

Since the set of pure strategies for each player is the unit interval $[0,1]$, a mixed strategy is a 
probability measure on the interval or, equivalently, a cumulative distribution function. Suppose that player 
$i$ uses the mixed strategy given by the cumulative distribution function $F_i$, while the other players guess 
$x_j$ for $j \neq i$. Then the expected payoff to player $i$ is the Riemann-Stieltjes integral
\begin{equation*}
\int A_i(\mathbf{x}) \, dF_i(x_i),
\end{equation*}
which, when $F'(x_i)$ exists and is bounded, is the same as the ordinary integral
\begin{equation*}
\int A_i(\mathbf{x}) F_i'(x_i) \, dx_i.
\end{equation*}
It will turn out that the optimal mixed strategies that we find are actually differentiable, but it is 
convenient to allow the larger class of cumulative distribution functions at the outset.  

\begin{definition}
Let $A_i$ be the payoff function for player $i$ in a zero-sum game with $n$ players. A \emph{solution} to the 
game is the data consisting of cumulative distribution functions $F_1,\dots,F_n$ and real numbers 
$v_1,\dots,v_n$ such that $\sum_i v_i = 0 $ and for each $i=1,\ldots,n$ and for all $x_j \in [0,1]$ with 
$j \neq i$, the inequality
\begin{equation*} 
v_i \leq \int  A_i(\mathbf{x}) \, dF_i(x_i)
\end{equation*}
holds.  We refer to the $F_i$ as \emph{optimal strategies}.
\end{definition}

By definition, using an optimal strategy $F_i$ guarantees player $i$ an expected payoff of at least $v_i$ 
regardless of what the other players do.  It follows that 
\begin{equation*}
v_i \leq 
\int \cdots \iiint  \cdots \int A_i(\mathbf{x}) \, dG_1(x_1) \cdots dG_{i-1}(x_{i-1}) dF_i(x_i) 
dG_{i+1}(x_{i+1}) \cdots dG_{n}(x_{n}) 
\end{equation*}
for any cumulative distribution functions $G_1,\dots,G_{i-1},G_{i+1},\dots,G_{n}$.
The condition that $\sum_i v_i = 0$, combined with the fact that we are considering a zero-sum game, implies 
that $v_i$ is the maximal expected payoff that player $i$ can ensure for himself.   This fact is implied by 
Theorem \ref{equals} below.

\begin{theorem}
\label{equals}
If $F_1,\dots,F_n$ together with $v_1,\dots,v_n$ is a solution for a zero-sum game with payoffs 
given by $A_i$, then
\begin{equation*}
v_i = \int \cdots \int  A_i(\mathbf{x}) \, dF_1(x_1)  \cdots dF_n(x_n).
\end{equation*}
This iterated integral gives the expected payoff to player $i$ when each player $j$ is using the 
cumulative distribution function $F_j$ to select his real number.  So Theorem \ref{equals}  
says that the expected payoff to player $i$ is $v_i$ when each player is using an optimal strategy.
\end{theorem}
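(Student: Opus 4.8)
The plan is to exploit the zero-sum condition together with the defining inequalities of a solution. Write $\tilde v_i$ for the iterated integral $\int \cdots \int A_i(\mathbf{x})\, dF_1(x_1)\cdots dF_n(x_n)$, which is the quantity we want to identify with $v_i$. The strategy is first to prove the one-sided bound $v_i \leq \tilde v_i$ for every $i$, then to sum over $i$ and use $\sum_i v_i = 0$ together with the zero-sum identity $\sum_i A_i(\mathbf{x}) = 0$ to force each of these inequalities to collapse to an equality.

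First I would establish $v_i \leq \tilde v_i$. By the definition of a solution, for every fixed choice of $x_j \in [0,1]$ with $j \neq i$ we have $v_i \leq \int A_i(\mathbf{x})\, dF_i(x_i)$. The right-hand side is a function of the remaining variables $(x_j)_{j \neq i}$ that dominates the constant $v_i$ pointwise, so I would integrate both sides against the product measure $\prod_{j \neq i} dF_j(x_j)$. Because each $F_j$ is a cumulative distribution function, integrating the constant $v_i$ against this probability measure returns $v_i$, while the right-hand side becomes the full iterated integral $\tilde v_i$ once Fubini's theorem is used to arrange the order of integration. Monotonicity of the integral then yields $v_i \leq \tilde v_i$.

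Next I would sum these inequalities over $i = 1, \ldots, n$. The left-hand side is $\sum_i v_i$, which equals $0$ by hypothesis. On the right-hand side, linearity of the integral lets me bring the sum inside, giving $\sum_i \tilde v_i = \int \cdots \int \bigl(\sum_i A_i(\mathbf{x})\bigr)\, dF_1(x_1)\cdots dF_n(x_n)$; the integrand vanishes identically by the zero-sum property, so $\sum_i \tilde v_i = 0$ as well. Thus $\sum_i v_i = \sum_i \tilde v_i$ while $v_i \leq \tilde v_i$ term by term, which is possible only if $v_i = \tilde v_i$ for every $i$, since a strict inequality in any one coordinate would make the summed inequality strict and contradict the equality of the two totals.

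I expect the only delicate point to be the justification of interchanging the order of integration in the first step, that is, verifying that Fubini's theorem applies to the Riemann--Stieltjes integrals involved. This should be harmless here because the payoff functions $A_i$ are bounded (the total stake is finite, so each $\lvert A_i \rvert$ is bounded by a constant depending only on $n$), which guarantees integrability against the product of probability measures and legitimizes both the interchange of the iterated integrals and the passage of the finite sum through the integral.
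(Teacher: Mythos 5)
Your proposal is correct and follows essentially the same route as the paper: both first integrate the defining inequality of a solution against the other players' distributions to get $v_i \leq \int\cdots\int A_i\,dF_1\cdots dF_n$, and both then use the zero-sum identity $\sum_i A_i = 0$ together with $\sum_i v_i = 0$ to force every such inequality to be an equality. The paper packages this as a single chain of inequalities beginning and ending at $v_i$, while you sum over $i$; the two bookkeeping choices are interchangeable.
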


\begin{proof} 
We have 
\begin{align*}
v_i & \leq \int \cdots \int  A_i(\mathbf{x}) \, dF_1(x_1)  \cdots dF_n(x_n)  \\
& = \int \cdots \int  \left( -\sum_{j \ne i}  A_j(\mathbf{x}) \right) \, dF_1(x_1) \cdots dF_n(x_n)\\
& = - \sum_{j \ne i} \int \cdots \int  A_j(\mathbf{x}) \, dF_1(x_1)  \cdots dF_n(x_n)  \\
& \leq  - \sum_{j \ne i} v_j \\
& = v_i
\end{align*}
and so we must have equalities instead of inequalities throughout. 
\end{proof}

Theorem \ref{eq} below says that if every player except player $i$ uses an optimal cumulative 
distribution function, then player $i$ will earn an expected payoff of $v_i$.  Theorem \ref{eq} may 
be considered a continuous analogue of the equilibrium theorem for games.

\begin{theorem}
\label{eq}
If $F_1,\dots,F_n$ together with $v_1,\dots,v_n$ is a solution for a zero-sum game with payoffs 
given by $A_i$, then
\begin{equation*}
v_i = 
\int \cdots \int \int \cdots \int A_i(\mathbf{x}) \, dF_1(x_1) \cdots 
dF_{i-1}(x_{i-1}) dF_{i+1} (x_{i+1}) \cdots dF_n(x_n).
\end{equation*}
There are $n-1$ integrals here; the missing integral corresponds to $dF_i$.   
\end{theorem}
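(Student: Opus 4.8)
The plan is to isolate the function of $x_i$ defined by the $(n-1)$-fold integral in the statement; call it
\begin{equation*}
\phi_i(x_i) = \int \cdots \int A_i(\mathbf{x}) \, dF_1(x_1) \cdots dF_{i-1}(x_{i-1}) dF_{i+1}(x_{i+1}) \cdots dF_n(x_n).
\end{equation*}
I would first establish the pointwise bound $\phi_i(x_i) \le v_i$ for every $x_i \in [0,1]$, then use Theorem \ref{equals} to show that the $F_i$-average of $\phi_i$ equals $v_i$. Since a quantity bounded above by $v_i$ whose average against the probability measure $dF_i$ attains $v_i$ must equal $v_i$ off a set of $F_i$-measure zero, these two facts force $\phi_i(x_i) = v_i$ for $F_i$-almost every $x_i$, i.e.\ on the support of player $i$'s optimal strategy, which is exactly where the assertion carries content. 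When $F_i$ has full support and $\phi_i$ is continuous, as happens for the strategies found later in the paper, this upgrades to equality for all $x_i$.

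For the pointwise bound, the idea is to trade player $i$'s payoff for the opponents' payoffs through the zero-sum relation $A_i = -\sum_{j \ne i} A_j$, giving
\begin{equation*}
\phi_i(x_i) = -\sum_{j \ne i} \int \cdots \int A_j(\mathbf{x}) \, dF_1(x_1) \cdots dF_{i-1}(x_{i-1}) dF_{i+1}(x_{i+1}) \cdots dF_n(x_n).
\end{equation*}
For each fixed $j \ne i$, I would bring the integration against $dF_j$ innermost and invoke the defining inequality of a solution applied to player $j$, namely $\int A_j(\mathbf{x}) \, dF_j(x_j) \ge v_j$, which holds for every fixed choice of the remaining coordinates (including the fixed $x_i$). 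Integrating this pointwise inequality against the product of the other $dF_k$ with $k \ne i,j$ preserves it, since each $dF_k$ has total mass one, so the $j$th summand is at least $v_j$. Hence $\phi_i(x_i) \le -\sum_{j \ne i} v_j = v_i$, the last equality being $\sum_j v_j = 0$.

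To close the argument I would apply Theorem \ref{equals} to player $i$, obtaining $\int \phi_i(x_i)\, dF_i(x_i) = v_i$ by integrating the missing variable last. Combined with $\phi_i \le v_i$ and the fact that $dF_i$ is a probability measure, this gives $\int (v_i - \phi_i(x_i))\, dF_i(x_i) = 0$ with nonnegative integrand, whence $\phi_i = v_i$ almost everywhere $[dF_i]$. An essentially equivalent packaging is to prove that each opponent's averaged payoff $\int \cdots \int A_j(\mathbf{x}) \prod_{k \ne i} dF_k(x_k)$ equals $v_j$ almost everywhere and then sum over $j \ne i$.

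The main obstacle I anticipate is not the inequality chain but the measure-theoretic bookkeeping: justifying that the innermost $dF_j$-integration can be separated out and that the pointwise solution inequality survives integration against the remaining variables. Because the payoffs are bounded (the pot is finite, so each $A_i$ is bounded), the iterated Riemann--Stieltjes integrals are finite and their order may be interchanged by Fubini--Tonelli, which legitimizes these steps. The second point requiring care is the gap between ``almost everywhere $[dF_i]$'' and ``for all $x_i$'': the honest conclusion is equality on the support of $F_i$, matching the classical equilibrium (indifference) theorem, with the literal reading of the displayed identity needing the full-support and continuity remark noted above.
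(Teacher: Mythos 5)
Your proposal is correct and follows essentially the same route as the paper: the pointwise bound $\phi_i(x_i) \le v_i$ (which the paper obtains from the remark that $v_i$ is the most player $i$ can achieve against optimal opponents---exactly your zero-sum trading argument $A_i = -\sum_{j\ne i} A_j$) combined with Theorem \ref{equals} to show the $dF_i$-average of $\phi_i$ equals $v_i$. If anything your write-up is the more careful one: the paper's proof by contradiction integrates a strict inequality assumed at a single $x_i$ against $dF_i$, which only produces $v_i > v_i$ when the strict inequality holds on a set of positive $F_i$-measure, so the honest conclusion is the $F_i$-almost-everywhere equality (equality on the support of $F_i$) that you identify.
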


\begin{proof}
Since $v_i$ is the 
maximum expected payoff that player $i$ can ever achieve when his opponents are using optimal 
strategies, we have
\begin{equation*}
v_i \geq
\int \cdots \int \int \cdots \int A_i(\mathbf{x}) \, dF_1(x_1) \cdots 
dF_{i-1}(x_{i-1}) dF_{i+1} (x_{i+1}) \cdots dF_n(x_n).
\end{equation*}
Now suppose that equality does not hold so that
\begin{equation*}
v_i  >
\int \cdots \int \int \cdots \int A_i(\mathbf{x}) \, dF_1(x_1) \cdots 
dF_{i-1}(x_{i-1}) dF_{i+1} (x_{i+1}) \cdots dF_n(x_n).
\end{equation*}
Integrate both sides of the inequality with respect to $x_i$ to get
\begin{multline*}
\int (v_i) \, dF_i(x_i) 
>\int \bigg(  \int \cdots \int \int \cdots \int A_i(\mathbf{x}) \, dF_1(x_1) \\ \cdots 
dF_{i-1}(x_{i-1}) dF_{i+1} (x_{i+1}) \cdots dF_n(x_n)   \bigg)  \, dF_i(x_i) . 
\end{multline*}
The left side is $v_i$ and the right side is
\[ \int \cdots \int  A_i(\mathbf{x}) \, dF_1(x_1)  \cdots dF_n(x_n) , \]
which is $v_i$ by Theorem 1.2. Thus we have $v_i > v_i$, a contradiction, and so equality must hold.
\end{proof}

\section{The Price Is Right.}
\label{pir} 

In this section we consider the following guessing game.  A random number $r$ is selected uniformly 
in $[0,1]$.  A total of $n$ people guess $r$.  A winning guess is a guess closest to, but still 
smaller than, $r$.  This game is very similar to guessing game played on the daytime television 
program \emph{The Price Is Right}. The main difference is that guesses on the television program are
given sequentially so that the $i^{\textrm{th}}$ guesser has the privilege of knowing the previous 
$i-1$ guesses.

If $k \geq 1$ players out of $n$ select the same winning number, then we define the payoffs to be 
$n/k - 1$ to each winner and $-1$ to everyone else.  If all players guesses are larger than $r$, the
payoff to all players is $0$.  This zero-sum payoff scheme is the same as if each player antes $1$ to
play and winners share the pot.
 
We now provide a solution to the two person game using a technique outlined in \cite{owen} to solve 
certain continuous two person games.  

The two person game is symmetric (meaning that the game is the same for either player), so the 
expected payoff for both players should be $0$.  Additionally, there should be one strategy, given 
by a cumulative distribution function $F$, which is optimal for both players.  In order to find 
$F$ in the proof of Theorem \ref{2} below, we assume $F$ is constant except on some interval $(0,u)$ 
and that $F'(x)$ exists and is positive on $(0,u)$.   These natural assumptions allow us to replace 
Riemann-Stieltjes integrals with Riemann integrals in our calculations. 

\begin{theorem} 
\label{2}
An optimal strategy for the two person ``Price Is Right'' guessing game is the cumulative 
distribution function defined by $F(x) = 1/\sqrt{1-x} - 1$ for $x \in (0,3/4)$. 
\end{theorem}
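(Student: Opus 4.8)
The plan is to use the indifference method for symmetric two-person zero-sum games: guess that the value is $v_1=v_2=0$ and that both players use a single optimal cumulative distribution function $F$ supported on an interval $[0,u]$, then determine $F$ and $u$ from the requirement that an opponent who plays any pure strategy in the support earns exactly the value. First I would compute the payoff function explicitly. Writing $A_1(x,y)$ for player 1's expected return over the uniform target $r$ when player 1 guesses $x$ and player 2 guesses $y$, I split on the position of $r$ relative to $\min(x,y)$ and $\max(x,y)$. Since ties occur with probability zero for a continuous target, integrating over $r\in[0,1]$ gives the piecewise-linear form
\[
A_1(x,y)=\begin{cases} 2y-x-1, & x<y,\\ 0, & x=y,\\ y-2x+1, & x>y.\end{cases}
\]

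Next, with $f=F'$ on $(0,u)$, I define $\phi(y)=\int_0^u A_1(x,y)f(x)\,dx$, the expected payoff to a player using $F$ against an opponent's pure guess $y$. By the definition of a solution (with $v_1=v_2=0$) together with Theorem \ref{equals}, it suffices to exhibit an $F$ with $\phi(y)\ge 0$ for all $y\in[0,1]$, since then $F,F$ with values $0,0$ satisfies the definition and symmetry pins the common value at $0$. Splitting the integral at $x=y$ and differentiating with the Leibniz rule, I obtain $\phi'(y)=2(y-1)f(y)+F(y)+1$. Imposing the indifference condition $\phi'(y)=0$ on the support yields the separable ODE
\[
F'(y)=\frac{F(y)+1}{2(1-y)},
\]
whose solution with $F(0)=0$ is $F(y)=(1-y)^{-1/2}-1$; the normalization $F(u)=1$ then forces $u=3/4$.

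Finally comes the verification step, which I expect to be the real content rather than the derivation. On the support I must confirm that $\phi$ is not merely constant but identically $0$: since $\phi'\equiv 0$ there, it is enough to evaluate $\phi$ at one convenient point, e.g.\ $y=0$, where $\tfrac12\int_0^{3/4}(1-2x)(1-x)^{-3/2}\,dx$ computes to $0$. Off the support, for $y\in(3/4,1]$ every $x$ in the support satisfies $x<y$, so $\phi(y)=\int_0^{3/4}(2y-x-1)f(x)\,dx$ is affine in $y$ with positive slope $2F(3/4)=2$ and agrees with $0$ at $y=3/4$ by continuity; hence $\phi(y)\ge 0$ there as well. The main obstacle is therefore not any single hard estimate but ensuring that the indifference calculation on $(0,3/4)$ is matched by a genuine global check $\phi\ge 0$, and confirming that $f(x)=\tfrac12(1-x)^{-3/2}$ stays bounded on $[0,3/4]$ so that the Riemann--Stieltjes and Riemann integrals coincide, as was assumed when setting up $\phi$.
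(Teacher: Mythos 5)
Your proposal follows essentially the same route as the paper: impose the indifference condition against an opponent using $F$, differentiate to obtain the first-order ODE $2(x-1)F'(x)+F(x)+1=0$, and solve with $F(0)=0$ and $F(u)=1$ to get $F(x)=(1-x)^{-1/2}-1$ and $u=3/4$. The only difference is that you actually carry out the verification the paper waves at (``we can simply verify'') --- evaluating $\phi(0)=\tfrac12\int_0^{3/4}(1-2x)(1-x)^{-3/2}\,dx=0$ to pin down the constant and checking that guesses in $(3/4,1]$ give $\phi(y)\ge 0$ --- and both of those checks are correct.
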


\begin{proof} 
If the first player selects $x$ and the second selects $y$ for $x,y \in [0,u]$, the expected payoff 
$A(x,y)$ to the first player is
\begin{equation*}
A(x,y) 
= 
\begin{cases}
2 y - x - 1 & \text{if $x < y$,} \\
1 + y - 2 x & \text{if $x > y$.}
\end{cases}
\end{equation*}
Using Theorem \ref{eq}, the $x$ player should earn the expected payoff of $0$ when the $y$ player uses $F$.
That is, 
\begin{equation}
\label{check}
\begin{split} 
0 &= \int_{\mathbb{R}} A(x,y)  F'(y) \, dy \\
&= \int_0^x (1 + y - 2 x) F'(y) \, dy + \int_x^u (2 y - x - 1) F'(y) \, dy.
\end{split}
\end{equation}
At this point, we can simply verify that the function $F$ given in the statement of the theorem 
satisfies the above equation, but instead we will finish the proof by showing how such an $F$ is 
obtained.

Differentiating both sides of (\ref{check}) to eliminate the integrals, simplifying, 
and using $F(u) = 1$ and $F(0) = 0$, we 
arrive at the differential equation
\begin{equation*}
0 = 2(x-1) F'(x) + 1 + F(x).
\end{equation*}
The solution is $F(x) = C(2- 2 x)^{-1/2} - 1$ on $(0,u)$ for some constants $C$ and $u$.  Using
$F(0) = 0$, it follows that $C = \sqrt{2}$.  Using the fact that $F(x) \leq 1$ for all $x$, $u = 3/4$.
We have now found the function $F$ given in the statement of the theorem. 
\end{proof}

The probability that both players guess too high can be found using Theorem \ref{2}.  To find this 
probability, we first note that the probability that both players guess too high, provided $r$ is 
a random number selected uniformly, is $(1 - F(r))^2$ where $F(x)$ is the function in Theorem 
\ref{2}.  Therefore, the probability that both players guess too high is  
\begin{equation*}
\int_0^{3/4} (1 - F(r))^2 \, dr 
= \int_0^{3/4} \left(2 - \frac{1}{\sqrt{1-r}} \right)^2 \, dr 
= \ln 4 - 1
\approx 0.3863.
\end{equation*}

We now move on to the $n$ person game for $n \geq 3$.  The theory of continuous $n$ person symmetric 
zero-sum games is significantly less developed than that for two players, but we still are able to 
generalize the approach taken in the two person game. 

In the two person game, no player should guess a number larger than $3/4$.  What is the least upper 
bound for a player's guess for the $n$ person game? 

\begin{theorem}
\label{lub}
If there is an optimal strategy given by a differentiable cumulative distribution function $F$, 
then the least upper bound for a player's guess in the $n$-person game is 
$1 - \frac{1}{n} + \frac{1}{n^2}$ for $n \geq 2$. 
\end{theorem}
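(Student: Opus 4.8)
The plan is to turn the geometric description of the upper endpoint $u$ into two scalar equations by exploiting the equilibrium property. Since the $n$-person game is symmetric, the common value is $v = 0$, and by Theorem~\ref{eq} the expected payoff to a single player who guesses a fixed number $x$, while the other $n-1$ players draw independently from the optimal $F$, must equal $0$ for every $x$ in the support of $F$. Writing $E(x)$ for this one-variable expected payoff, the first task is to produce a closed-form expression for $E(x)$ in terms of $x$ and $F$, after which I will read off $u$ from the endpoint conditions.

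To build $E(x)$ I would condition on the target $r$ and split into the cases $r < x$ and $r \ge x$. When $r < x$ the guess is too high, so the player cannot win: he loses $1$ unless all $n-1$ opponents also overshoot, an event of probability $(1 - F(r))^{n-1}$, giving conditional payoff $-\bigl(1 - (1 - F(r))^{n-1}\bigr)$. When $r \ge x$ the guess is valid, and the player wins precisely when no opponent lands in $(x, r]$; because ties have probability zero for a differentiable $F$, this event has probability $(1 - (F(r) - F(x)))^{n-1}$ and yields conditional payoff $n(1 - (F(r) - F(x)))^{n-1} - 1$. Integrating over $r \in [0,1]$, and using $F(r) = 1$ for $r \ge u$, expresses $E(x)$ through the single auxiliary quantity $I = \int_0^u (1 - F(r))^{n-1}\,dr$.

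The key step is to impose the indifference condition $E(x) = 0$ at the two ends of the support. Sending $x \to u^{-}$, where $F(u) = 1$, collapses the ``valid'' integrand to the constant $n-1$ and produces one linear relation between $I$ and $u$, namely $I = 1 - n(1-u)$. Sending $x \to 0^{+}$, where $F(0) = 0$, kills the ``too high'' integral and gives the second relation $nI = 1$. Eliminating $I$ yields $1 - n(1-u) = 1/n$, so that $u = 1 - \frac1n + \frac{1}{n^2}$; as a check this equals $3/4$ when $n = 2$, matching Theorem~\ref{2}. A pleasant feature is that this determines $u$ using only the boundary values of $E$, so one never has to solve the full integro-differential equation that the interior condition $E \equiv 0$ imposes on $F$.

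The main obstacle is the careful bookkeeping in the payoff computation rather than the final algebra. I must correctly locate the boundary between overshooting and valid guesses, account for the possibility that \emph{every} player overshoots (payoff $0$ rather than $-1$), and use the vanishing of tie probabilities to identify the win event with ``no opponent in $(x,r]$.'' A secondary point to justify is that $E$ is continuous at the endpoints, so that the interior equality $E \equiv 0$ extends to $x = 0$ and $x = u$; this follows by dominated convergence from the continuity of $F$.
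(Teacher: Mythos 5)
Your proof is correct, and it reaches the paper's two endpoint relations by a genuinely different computational route. The paper first integrates out the target $r$ to obtain the pure-strategy payoff $A(x_1,\dots,x_n)$, specializes to $x_1=0$ and $x_1=u$ (where $A$ depends on the opponents only through $\min\{x_2,\dots,x_n\}$), and then integrates over the opponents' strategies by summing over the $(n-1)!$ orderings of $x_2,\dots,x_n$; the auxiliary unknown it eliminates is $\int_0^u x\,(1-F(x))^{n-2}F'(x)\,dx=\tfrac{1}{n(n-1)}$. You instead condition on $r$ first and compute the opponents' contribution directly from the distribution function, which yields a closed form for $E(x)$ with no ordering combinatorics, and your auxiliary unknown $I=\int_0^u(1-F(r))^{n-1}\,dr=\tfrac1n$ is exactly $n-1$ times the paper's, as one sees by integration by parts. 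Your case analysis is right (payoff $0$ when every player overshoots, a valid guess $x\le r$ wins iff no opponent lands in $(x,r]$, ties are negligible for continuous $F$), and the endpoint equations $nI=1$ and $I=1-n(1-u)$ do give $u=1-\tfrac1n+\tfrac1{n^2}$. Two minor remarks: your claim that $E(x)$ is expressible ``through the single auxiliary quantity $I$'' is literally true only at $x=0$ and $x=u$, since for interior $x$ the term $\int_x^1(1-F(r)+F(x))^{n-1}\,dr$ is not a function of $I$ alone --- but you use only the endpoints, so nothing is lost; and your closing observation that the interior indifference $E\equiv 0$ must be extended to the endpoints by continuity addresses a point the paper glosses over, so it is worth keeping.
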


\begin{proof}
Let $A(x_1,\dots,x_n)$ be the expected payoff to the first player when player $i$ guesses $x_i$ and 
let $u$ be the least upper bound for the set of possible guesses.  Using Theorem \ref{eq}, the 
function $F$ should satisfy 
\begin{equation}
\label{iint}
0 = \int \cdots \int A(x_1,\dots,x_n) F'(x_2) \cdots F'(x_n) \, dx_n \cdots dx_2
\end{equation}
for all $x_1$ where the integral is over the region $[0,u] \times \cdots \times [0,u]$.

In the case where $x_1 = 0$ and over the region where $x_2 \leq x_3 \leq \cdots \leq x_n$, the 
integral in \eqref{iint} becomes
\begin{multline}
\label{iint2}
\int_{0}^u \int_{x_2}^u \cdots \int_{x_{n-1}}^u  (n x_2 - 1) F'(x_2) 
\cdots F'(x_n) \, dx_n \cdots dx_2 \\
=  \int_{0}^u \frac{n x_2 - 1}{(n-2)!} (1 - F(x_2))^{n-2} F'(x_2) \, dx_2.
\end{multline}
To evaluate the $n-2$ integrals above, we used the fact that $F(u) =1$.  If the variables 
$x_2,\dots,x_n$ are written in any order other than $x_2 \leq x_3 \leq \cdots \leq x_n$, then the 
integral in \eqref{iint2} remains unchanged; this is because 
$A(0,x_2,\dots,x_n) = n \min \{x_2,\dots,x_n\} - 1$ and because we are able to interchange the order 
of integration.  There are $(n-1)!$ orders of the variables $x_2,\dots,x_n$.  Therefore, when 
$x_1 = 0$, \eqref{iint} becomes
\begin{equation*}
0 = (n-1) \int_0^u (n x_2 - 1) \left(1 - F(x_2) \right)^{n-2} F'(x_2) \, dx_2.
\end{equation*}
This equation can be rewritten as
\begin{equation}
\label{help1}
\int_0^u x_2  \left(1 - F(x_2) \right)^{n-2} F'(x_2) \, dx_2 = \frac{1}{n(n-1)}.
\end{equation}

At the other extreme, consider $x_1 = u$.  Assuming $x_2 \leq \cdots \leq x_n$, the integral in 
\eqref{iint} becomes 
\begin{multline*}
\int_{0}^u \int_{x_2}^u \cdots \int_{x_{n-1}}^u  (n - 1 - n u + x_2) F'(x_2) 
\cdots F'(x_n) \, dx_n \cdots dx_2 \\
=  \int_{0}^u \frac{ (n - 1 - n u + x_2)}{(n-2)!} (1 - F(x_2))^{n-2} F'(x_2) \, dx_2.
\end{multline*}
The integral above is invariant for any of the $(n-1)!$ orderings of the variables $x_2,\dots,x_n$.  So, if 
$x_1 = u$, equation \eqref{iint} becomes
\begin{equation*}
0 = (n-1) \int_0^u  (n - 1 - n u + x_2) \left(1 - F(x_2) \right)^{n-2} F'(x_2) \, dx_2.
\end{equation*}
Using equation \eqref{help1}, the above equation simplifies to $0 = (n - 1 - n u) + \frac{1}{n}$. 
Solving for $u$ proves the theorem. 
\end{proof}

\begin{theorem} 
\label{3}
An optimal strategy for the three person ``Price Is Right'' guessing game is approximately the 
cumulative distribution function 
\begin{align*}
F(x)  = 
&  0.726193 x+0.480269 x^2+0.181628 x^3+0.0444137 x^4+0.0509559 x^5 \\ 
&+0.0648413 x^6+0.0363864 x^7 + 0.0123602 x^8+0.0178371 x^9 \\ 
&+0.0244243 x^{10}+0.0144697 x^{11}+0.00535718 x^{12}+
\cdots 
\end{align*}
for $x \in [0,7/9]$.  See Figure 1 for the graph of this function.
\begin{figure}
\begin{center}
\includegraphics{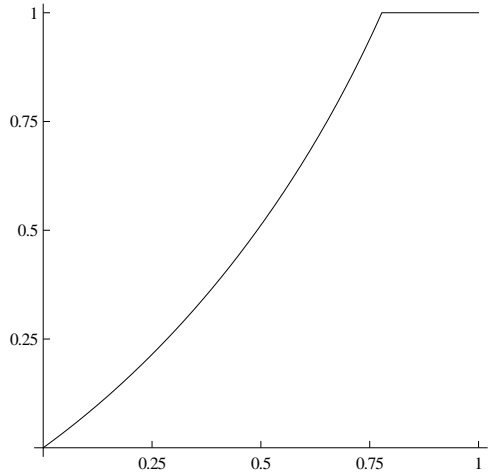}
\end{center}
\caption{An optimal strategy for the three player ``The Price Is Right'' guessing game.}
\end{figure}
\end{theorem}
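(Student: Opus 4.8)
The plan is to mirror the derivations behind Theorems~\ref{2} and~\ref{lub}. Since the three person game is symmetric, a symmetric solution should give every player value $v_i = 0$, and by Theorem~\ref{eq} an optimal strategy $F$ must make player $1$'s expected payoff identically zero, as a function of his own guess $x_1$, on the support $[0,u]$, where $u = 7/9$ by Theorem~\ref{lub} with $n = 3$. First I would record the payoff to player $1$ who guesses $x$ against opponents guessing $y \le z$, obtained by integrating over $r$ uniform in $[0,1]$:
\begin{equation*}
A(x;y,z) =
\begin{cases}
3y - 2x - 1 & \text{if } x \le y \le z, \\
3z + y - 3x - 1 & \text{if } y \le x \le z, \\
2 - 3x + y & \text{if } y \le z \le x.
\end{cases}
\end{equation*}
These specialize to $A(0;y,z) = 3y-1$ and $A(u;y,z) = 2 - 3u + y$, matching the values $n\min\{x_2,x_3\}-1$ and $n-1-nu+x_2$ used in the proof of Theorem~\ref{lub}.

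Next I would impose the optimality condition coming from Theorem~\ref{eq}, namely
\begin{equation*}
0 = \int_0^u \int_0^u A(x_1;x_2,x_3)\, F'(x_2) F'(x_3)\, dx_2\, dx_3 \qquad \text{for all } x_1 \in [0,u].
\end{equation*}
Exploiting the symmetry in $x_2$ and $x_3$ together with the three cases above, I would split this double integral into the three regions determined by the rank of $x_1$ and carry out the inner integrations, expressing everything through $F(x_1)$ and the truncated moment $M(x_1) = \int_0^{x_1} s\, F'(s)\, ds$. Differentiating the resulting identity in $x_1$ then turns the boundary contributions into local terms and eliminates the integrals, producing a differential equation for $F$. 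Because the integrand carries the product $F'(x_2)F'(x_3)$, this equation is quadratic rather than linear, in contrast to the $n=2$ equation $0 = 2(x-1)F' + 1 + F$ of Theorem~\ref{2}, so no elementary closed form is to be expected.

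To solve it I would posit a power series $F(x) = \sum_{k \ge 1} c_k x^k$, so that $F(0)=0$ holds automatically, substitute into the differential equation, and match coefficients of like powers of $x_1$. The quadratic nonlinearity turns this into a recurrence in which each $c_k$ is a convolution sum of the earlier coefficients together with $c_1$, so all the coefficients are determined once $c_1$ is fixed. The remaining global data — the value of $c_1$ and the constants $M(u)$ and $F(u)$ introduced during the reduction — are then pinned down by the boundary condition $F(7/9)=1$, the moment identity \eqref{help1}, and the $x_1 = u$ equation already appearing in the proof of Theorem~\ref{lub}. Solving the resulting constraint for $c_1$ numerically yields the decimal coefficients displayed in the statement, which is precisely why the strategy can only be given approximately.

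The hardest part will be the passage from the recurrence to a genuine solution. The condition $F(7/9)=1$ cannot be imposed term by term: it is a single transcendental constraint coupling all of the $c_k$ through the free parameter $c_1$, so one must argue that a suitable $c_1$ exists and then locate it numerically. Establishing that the resulting series converges on all of $[0,7/9]$ and defines a bona fide increasing cumulative distribution function there is the second delicate point. Optimality against guesses outside $[0,u]$ is easier: for $x_1 \ge u$ player $1$'s guess is necessarily the largest, so the expected payoff reduces to $2 - 3x_1 + \int_0^u\!\int_0^u \min\{x_2,x_3\} F'(x_2)F'(x_3)\,dx_2\,dx_3$, a strictly decreasing affine function of $x_1$ that already vanishes at $x_1 = u$, whence no overguess can improve on $v_1 = 0$.
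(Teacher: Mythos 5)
Your proposal follows the paper's proof essentially step for step: the same payoff function $A(x,y,z)$, the same optimality condition from Theorem~\ref{eq} over $[0,7/9]^2$ with $u=7/9$ taken from Theorem~\ref{lub}, reduction of the double integral to an integro-differential identity, conversion to a nonlinear ODE, and a power series ansatz whose coefficients are determined recursively subject to $F(0)=0$ and $F(7/9)=1$. The one place your plan understates the work is the claim that a single differentiation eliminates the integral terms: since $\int_x^{7/9} y F'(y)\,dy$ enters the identity multiplied by $F(x)$, one differentiation still leaves a term $F'(x)\int_x^{7/9} y F'(y)\,dy$ (equation~\eqref{d1} of the paper), and one must differentiate a second time and back-substitute for that integral using~\eqref{d2} to arrive at a genuine second-order nonlinear ODE --- a mechanical fix entirely within your stated strategy.
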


\begin{proof} 
Let $A(x,y,z)$ be the expected payoff to the first player when he guesses $x$ and his opponents guess $y$ and 
$z$. Assuming that $y < z$, we have 
\begin{equation*}
A(x,y,z) 
= 
\begin{cases}
3 y - 2 x - 1 & \text{if $x < y < z$,} \\
3 z + y - 3 x - 1 & \text{if $y < x < z$,} \\
2 + y - 3 x & \text{if $y < z < x$.} 
\end{cases}
\end{equation*}
For the other three orders, we use the symmetry $A(x,y,z)=A(x,z,y)$. We also disregard the possibility  
that $x=y$, $y=z$, or $x=z$ with the expectation that we find an optimal cumulative distribution function $F$ 
that does not give positive probability to single points.

The function $F$ should satisfy 
\begin{equation*}
0 = \iint A(x,y,z) F'(y) F'(z) \, dy \, dz  
\end{equation*}
where the integral is taken over $[0,7/9] \times [0,7/9]$ (this $7/9$ comes from Theorem \ref{lub}).
Breaking this integral into six pieces, expanding, and simplifying where possible using $F(0) = 0$ 
and $F(7/9) = 1$, we find
\begin{multline*}
0 = -1-2 x-2 x F(x)+3 F(x)^2+x F(x)^2+2 \int_0^x y F'(y) \, dy+6 \int_x^{7/9} y F'(y) \, dy \\
+ 6 F(x) \int_x^{7/9} y F'(y) \, dy-2 \int_0^x y F(y) F'(y) \, dy-6 \int_x^{7/9} y F(y) F'(y) \, dy.
\end{multline*}
Differentiating, we obtain
\begin{equation}
\label{d1}
\begin{split}
0 = -2&-2 F(x)+F(x)^2-6 x F'(x)+6 F(x) F'(x) \\ &+6 \left(\int_x^{7/9} y F'(y) \, dy\right) F'(x).
\end{split}
\end{equation}
The term with the integral sign is the only term which prohibits us from finding an ordinary 
differential equation.  Differentiating \eqref{d1} once more gives 
\begin{equation}
\label{d2} 
\begin{split}
0 = -8 & F'(x)+2 F(x) F'(x)+6 F'(x)^2-6 x F'(x)^2-6 x F''(x) \\ & +6 F(x) F''(x)
+6 \left(\int_x^{7/9} y F'(y) \, dy\right) F''(x).
\end{split}
\end{equation}
Again, the term with the integral sign is the only term which prohibits us from finding an ordinary 
differential equation, but now we can use \eqref{d2} to find an expression for 
$\int_x^{7/9} y F'(y) \, dy$ and use it to replace this integral in \eqref{d1}.  Doing so yields 
\begin{equation*}
0 = \frac{8 F'(x)^2-2 F(x) F'(x)^2-6 F'(x)^3+6 x F'(x)^3-2 F''(x)-2 F(x) F''(x)+F(x)^2 F''(x)}
{F''(x)} .
\end{equation*}
It is possible to to use the numerator of the above differential equation, with the help of the 
boundary conditions $F(0) = 0$ and $F(7/9) = 1$, to find a power series solution for $F(x)$.  
This was done by plugging a function of the form $F(x) = \sum c_i x^i$ for some coefficients $c_i$ into the 
differential equation and finding the coefficients $c_i$ recursively.  This process finds the power series in 
the statement of the theorem. 
\end{proof}

Solutions to the $n$ player game with $n \geq 4$ can be found generalizing the proof of Theorem 
\ref{3}, although working out the details may be unreasonable.  Solving the four person game, for 
instance, involves repeatedly differentiating an integro-differential equation in order to 
back-substitute for unknown terms similar to $\int_0^{7/9} y F'(y) \, dy$.  Working through these 
calculations to find a numerical approximation for $F$ with acceptable accuracy can be a challenge 
for a computer algebra system running on a standard desktop computer.

\section{The Closest Wins.} 
\label{closest}

In this section, we study the payoff scheme so that the winner of the guessing game is the player with the 
guess closest to the random number $r$.  If the number $r$ is selected uniformly from the interval $[0,1]$ and 
$k \geq 1$ of the $n$ players select the same winning number, then the payoff is $n/k - 1$ to each 
winner and $-1$ to everyone else.  This payoff is the result of each player anteing $1$ with winning players 
sharing the pot.

The two person game has a saddle point: each player should always guess $1/2$.  For three or more players, 
however, guessing $1/2$ every time is not optimal---if a player knew that each of his opponents were playing 
$1/2$, then it would be better to deviate from this strategy by playing $1/2+\varepsilon$ for a small 
enough $\varepsilon$.

For the three person game, a first strategy to check is the uniform distribution on an interval 
centered around $1/2$.  

\begin{theorem}
\label{3person}
The uniform distribution on the interval $[1/4,3/4]$ is optimal for the three person game.
\end{theorem}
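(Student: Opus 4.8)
The plan is to verify directly that the uniform density is an equalizing strategy, in the spirit of the proofs of Theorems \ref{2} and \ref{3}. By symmetry the common value must be $v_1 = v_2 = v_3 = 0$, and we take the common strategy $F$ to be the uniform distribution on $[1/4,3/4]$, so that $F'(x) = 2$ there and $0$ elsewhere. Following the use of Theorem \ref{eq} in the earlier proofs, it suffices to show that when players $2$ and $3$ both draw from $F$, the expected payoff to player $1$ from any pure guess $x$,
\[
B(x) = \iint A(x,y,z)\,F'(y)F'(z)\,dy\,dz,
\]
equals $0$ for every $x$ in the support $[1/4,3/4]$ and is at most $0$ for $x \notin [1/4,3/4]$; this exhibits $F$ as a best response to itself, so that a Nash equilibrium results when all three players use $F$.

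First I would record the payoff function. Averaging over the uniform target $r$, the expected payoff to player $1$ guessing $x$ against $y,z$ is $A(x,y,z) = 3L_1(x,y,z) - 1$, where $L_1$ is the length of player $1$'s ``territory'', the set of $r \in [0,1]$ strictly closer to $x$ than to $y$ or $z$ (ties among guesses or with $r$ carry probability zero). Sorting the three guesses, $L_1$ is read off from the nearest-neighbor midpoints: if $x$ is smallest it is $\tfrac12(x + \min(y,z))$, if largest it is $1 - \tfrac12(x + \max(y,z))$, and if it lies between the opponents it is simply $\tfrac12|y-z|$. Consequently $B(x) = 3\,W(x) - 1$, where $W(x) = \iint L_1\,F'(y)F'(z)\,dy\,dz$ is exactly player $1$'s winning probability, so the claim $B(x)=0$ on the support is equivalent to $W(x) = 1/3$ there.

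Next I would compute $W(x)$ for $x \in [1/4,3/4]$ by partitioning the square $[1/4,3/4]^2$ of opponent guesses into the three regions where $x$ is leftmost, middle, or rightmost and using the corresponding formula for $L_1$ in each. Each piece is an elementary integral of a low-degree polynomial, producing three contributions $T_0(x), T_1(x), T_2(x)$ that are individually cubic in $x$; the crux of the argument is that when they are added the entire $x$-dependence cancels and leaves the constant $1/3$. The reflection $x \mapsto 1-x$, under which $F$ is invariant, halves the work by giving $T_2(x) = T_0(1-x)$. For $x \notin [1/4,3/4]$ the computation is much easier: player $1$ is then always the extreme guess, so $L_1 = \tfrac12(x+\min(y,z))$ for $x < 1/4$, giving $W(x) = \tfrac12\big(x + E[\min(y,z)]\big) = \tfrac12\big(x + \tfrac{5}{12}\big) < \tfrac13$, and symmetrically for $x > 3/4$; hence $B(x) < 0$ off the support and the value $0$ is confirmed by Theorem \ref{equals}.

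The main obstacle is the on-support computation: individually the three regional integrals depend nontrivially on $x$, and one must carry out the bookkeeping carefully enough to see the cubic, quadratic, and linear terms cancel exactly. A tempting shortcut, namely differentiating $W(x)$ in $x$ and arguing that the marginal gains and losses of territory balance, in fact fails: $L_1$ jumps as an opponent crosses $x$ (the cell of $x$ changes discontinuously from a squeezed middle cell to a wide end cell), so differentiating under the expectation omits boundary contributions and gives the wrong answer. I would therefore rely on the honest region-by-region integration rather than on a differentiation trick.
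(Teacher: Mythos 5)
Your proposal is correct and takes essentially the same route as the paper: both verify by direct, routine integration that the uniform density on $[1/4,3/4]$ equalizes player 1's expected payoff at $0$ against two opponents using that density, working from the same payoff function $A(x,y,z) = 3L_1 - 1$. You additionally verify that guesses outside $[1/4,3/4]$ give expected payoff strictly below $0$ (via $W(x)=\tfrac12(x+\tfrac{5}{12})<\tfrac13$ for $x<1/4$ and its mirror image), a point the paper's one-line proof leaves implicit.
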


\begin{proof}
Let $A(x,y,z)$ be the expected payoff to the first player when he guesses $x$ and his opponents guess $y$ and 
$z$. Assuming $y < z$, we have
\begin{equation*}
A(x,y,z) 
= 
\begin{cases}
3(x+y)/2 - 1 & \text{if $x < y < z$,} \\
3(z-y)/2 - 1 & \text{if $y < x < z$,} \\
2 - 3(x+z)/2 & \text{if $y < z < x$.}
\end{cases}
\end{equation*}
For the other three orders, we use the symmetry $A(x,y,z)=A(x,z,y)$. 

Routine integration shows that $F(x)=2(x-1/2)$ satisfies the equation
\begin{equation*}
0 = \int_{1/4}^{3/4} \int_{1/4}^{3/4}A(x,y,z) F'(y) F'(z) \, dy \, dz ,
\end{equation*}
as needed to prove the theorem.
\end{proof}

It is possible to use the symmetry of the game about $1/2$ to find the solution in Theorem \ref{3person}. 
This approach is explained and used in Theorem \ref{4person} to find a solution to the four person game.

\begin{theorem}
\label{4person} 
A solution to the four person guessing game is for each player to use the cumulative distribution
function given approximately by 
\begin{align*}
F(x)  = &\frac{1}{2} + 0.636459 {(2 x - 1)} + 0.214848 {(2 x - 1)}^3 + 0.144471 {(2 x - 1)}^5 \\
&+ 0.123155 {(2 x - 1)}^7 
 +  0.118372 {(2 x - 1)}^9 + 0.122339 {(2 x - 1)}^{11} \\ &+ 0.132742 {(2 x - 1)}^{13} 
+ 0.149142 {(2 x - 1)}^{15} + \cdots
\end{align*}
on the interval $[0.174989,0.825011]$.  

See Figure 2 for the graph of this function.
\begin{figure}
\begin{center}
\includegraphics{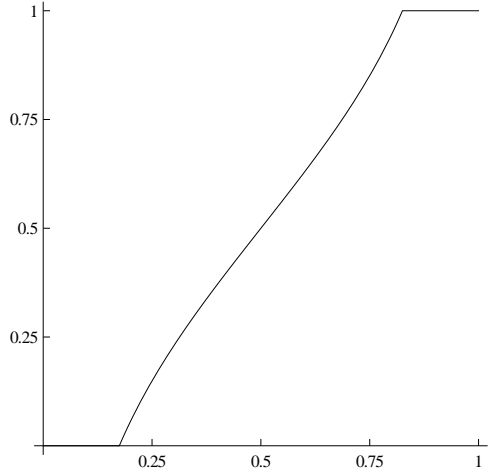}
\end{center}
\caption{An optimal strategy for the four player ``Closest Wins'' guessing game.}
\end{figure}
\end{theorem}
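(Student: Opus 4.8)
The plan is to mirror the constructive argument used for the three-person ``Price Is Right'' game in Theorem~\ref{3}, adapted to the closest-wins payoff and to four players, while exploiting the reflection symmetry of the game about $1/2$. First I would record the payoff $A(x,y,z,w)$ to the first player when he guesses $x$ and his opponents guess $y,z,w$. Under the closest-wins rule the probability that a guess beats the others equals the length of the interval of targets $r$ for which it is nearest, an interval cut off at the midpoints between that guess and its neighbors among the sorted guesses. Writing the winner's expected payoff as $4P-1$, where $P$ is this length, and assuming $y<z<w$, I would obtain
\begin{equation*}
A(x,y,z,w)=
\begin{cases}
2x+2y-1 & x<y<z<w,\\
2z-2y-1 & y<x<z<w,\\
2w-2z-1 & y<z<x<w,\\
3-2w-2x & y<z<w<x,
\end{cases}
\end{equation*}
extending to the remaining orderings by the full symmetry of $A$ in its last three arguments.

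Next, because the game is symmetric among the players and zero-sum, each value is $v_i=0$, so by Theorem~\ref{eq} an optimal $F$ must satisfy
\begin{equation*}
0=\iiint A(x,y,z,w)\,F'(y)F'(z)F'(w)\,dy\,dz\,dw
\end{equation*}
for every $x$ in the support $[a,1-a]$, the integral being taken over $[a,1-a]^3$. I would break this into the $4!$ orderings of the four variables; by the symmetry of $A$ it suffices to integrate over $y<z<w$ and to sum over the four placements of $x$ among them, where integrating out the variables lying above $x$ produces factors of $1-F$ through $F(1-a)=1$. After expanding and simplifying with $F(a)=0$ and $F(1-a)=1$, this yields an integro-differential equation whose only obstructions to being an ordinary differential equation are tail integrals of the form $\int_x^{1-a} y\,(1-F(y))^{j}F'(y)\,dy$.

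The symmetry about $1/2$ is what makes the rest tractable, and I would explain it here as promised after Theorem~\ref{3person}. Reflecting every guess and the target through $1/2$ leaves the game unchanged, so I posit an optimal strategy with $F(x)=1-F(1-x)$; equivalently $F(x)-\tfrac12$ is an odd function of $2x-1$ and $F(1/2)=1/2$. This parity collapses the conditions at $x$ and at $1-x$ into one, lets me restrict to $x\in[1/2,1-a]$, and forces the power series for $F$ to contain only odd powers of $2x-1$, exactly the form claimed. Following the proof of Theorem~\ref{3}, I would then differentiate the integro-differential equation repeatedly, using each new relation to solve for and back-substitute the surviving tail integrals until a pure ordinary differential equation remains, and solve it with the ansatz $F(x)=\tfrac12+\sum_{j\ge0} b_j(2x-1)^{2j+1}$, matching coefficients recursively to reproduce the stated series.

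The main obstacle is twofold. The combinatorial bookkeeping is heavier than in the three-person case: there are more orderings and more than one independent tail integral to remove, so the differentiate-and-substitute step must be iterated, just as the remark following Theorem~\ref{3} warns for the four-person ``Price Is Right'' game. The subtler point is that, unlike the ``Price Is Right'' games, there is no prior result such as Theorem~\ref{lub} fixing the endpoint $a$. Since repeated differentiation discards constants of integration, $a$ must be recovered by imposing the original undifferentiated equation at a convenient point, say $x=1/2$ or $x=a$, together with $F(a)=0$; this produces a transcendental condition whose numerical solution is $a\approx 0.174989$, consistent with the support $[0.174989,0.825011]$ and its symmetry about $1/2$. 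As in Theorem~\ref{2}, one could instead simply verify that the exhibited series satisfies the equation, but the constructive route above is what simultaneously determines both $F$ and $a$.
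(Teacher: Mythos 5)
Your outline is correct and lands on the same essential strategy as the paper: exploit the reflection symmetry about $1/2$, derive an integro-differential equation from the equilibrium condition of Theorem \ref{eq}, repeatedly differentiate and back-substitute to eliminate the tail integrals, solve the resulting ODE by a power series in odd powers of $2x-1$, and pin down the two remaining unknowns (the leading coefficient and the endpoint) from the boundary value of $F$ together with the original undifferentiated equation evaluated at the endpoint of the support --- which is exactly how the paper recovers $u\approx 0.650021$ and hence the interval $[0.174989,0.825011]$. Your payoff table for $A(x,y,z,w)$ is also correct. The one place you diverge is in how the symmetry is implemented: rather than imposing $F(x)=1-F(1-x)$ as an ansatz on the original game, the paper constructs an auxiliary ``folded'' game $\widehat{A}(x,y,z,w)=\tfrac{1}{16}\sum A\bigl(\tfrac{1\pm x}{2},\tfrac{1\pm y}{2},\tfrac{1\pm z}{2},\tfrac{1\pm w}{2}\bigr)$, solves for an optimal $\widehat{F}$ on $[0,u]$, and then unfolds via $F(x)=\tfrac12+\widehat{F}(2x-1)/2$ once $\widehat{F}$ is seen to be odd. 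The folding buys a cleaner bookkeeping: the symmetry is built into the game itself rather than assumed of the solution, the domain is halved before any integration is done, and the oddness of $\widehat{F}$ emerges from the series computation rather than being postulated. Your direct route should reach the same ODE, but you would be carrying twice as many regions and tail integrals through the differentiate-and-substitute step, and you would need to justify (or at least flag as an ansatz, as you do) that an optimal $F$ respecting the symmetry exists. Neither difference is a gap; it is a trade of a conceptual device for heavier computation.
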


\begin{proof}
Let $A(x,y,z,w)$ be the function giving the expected payoff when players select $x,y,z,w \in [0,1]$.
Let $F$ be a cumulative distribution function giving an optimal strategy for each player in the game 
$A$.  The symmetry of the game $A$ implies that $F(1/2 - x/2) = 1 - F(1/2 + x/2)$ for all 
$x \in [0,1]$; this is simply saying that the probability that a player selects a number less than 
$1/2 - x/2$ should be the same as the probability that a number greater than $1/2 + x/2$. 

Instead of analyzing $A$, we will consider a related game $\widehat{A}$ by taking the symmetry about 
$1/2$ into account.  Let $\widehat{A}(x,y,z,w)$ be the game defined by
\begin{equation*}
\widehat{A}(x,y,z,w) 
= \frac{1}{16} \sum A \left( \frac{1 \pm x}{2}, \frac{1 \pm y}{2}, \frac{1 \pm z}{2}, 
\frac{1 \pm w}{2} \right)
\end{equation*}
where the sum runs over all 16 independent choices for the $+$ or $-$ in each argument.  It is useful
to think of $\widehat{A}$ as the expected payoff for the following game.  Four players select 
$x,y,z,w \in [0,1]$.  Then, $1/2+x/2$ and $1/2-x/2$ will be played in $A$ with probability $1/2$ 
each, $1/2+y/2$ and $1/2-y/2$ will be played in $A$ with probability $1/2$ each, and similarly for 
$z$ and $w$.  After calculating the expected payoffs, $\widehat{A}$ satisfies
\begin{equation}
\label{wideA}
\widehat{A}(x,y,z,w) =
\begin{cases}
(-8 + 16 y + 8 z + 4 w)/16 & \text{if $x < y < z < w$},  \\
(-8 - 4 x + 8 z + 4 w)/16  & \text{if $y < x <  z < w$}, \\
(-8 x - 8 z + 8w)/16 & \text{if $y < z < x < w$}, \\
(-8 + 16 x - 4 z - 8 w)/16 & \text{if $y < z < w < x$}.
\end{cases}
\end{equation}
The other values of $\widehat{A}(x,y,z,w)$ can be found using symmetry. 

If $\widehat{F}$ is a cumulative distribution function giving an optimal solution for $\widehat{A}$, then  
\begin{align*}
\widehat{F}(x) 
& = (\text{the probability of playing $\leq x$ in the game $\widehat{A}$}) \\
& = (\text{the probability of playing in  $(1/2 - x/2, 1/2 + x/2)$ in the game $A$}) \\
& = F(1/2 + x/2) - F(1/2 - x/2) \\
& = 1 - 2 F(1/2 - x/2)
\end{align*}
where the last line used the identity $F(1/2 + x/2) = 1 - F(1/2 - x/2)$.  Solving for $F(x)$ gives
\begin{equation*} 
F(x) = 
\begin{cases}
1/2 - \widehat{F}(1 - 2 x)/2   & \text{if $x \in [0,1/2]$} \\
1/2 + \widehat{F}(2 x - 1)/2 & \text{if $x \in [1/2,1]$}.
\end{cases}
\end{equation*}
If $\widehat{F}$ happens to be an odd function, then
\begin{equation}
\label{odd} 
F(x) = 1/2 + \widehat{F}(2 x - 1)/2.
\end{equation}

The function $\widehat{F}$ should satisfy
\begin{equation*}
0 = \iiint \widehat{A}(x,y,z,w) \widehat{F}'(y) \widehat{F}'(z) \widehat{F}'(w) \, dy \, dw \, dz 
\end{equation*}
where the integral is over $[0,u] \times [0,u] \times [0,u]$.  Evaluating the integral by splitting 
it up into 24 regions and using our knowledge of $\widehat{A}$ displayed in \eqref{wideA}, we arrive 
at the following equation:
\begin{equation}
\label{first}
\begin{split}
0 = 
&-\frac{1}{2}-\frac{3}{4} x \widehat{F}(x)+\frac{3 \widehat{F}(x)^2}{2}-\frac{1}{4} x \widehat{F}(x)^3
+3 \int_x^u y \widehat{F}'(y) \, dy \\ 
&+\frac{3}{4} \widehat{F}(x)^2 \int_x^u y \widehat{F}'(y) \, dy 
-3 \int_0^x y \widehat{F}(y) \widehat{F}'(y) \, dy \\
&+\frac{3}{2} \widehat{F}(x) \int_0^x y \widehat{F}(y) \widehat{F}'(y) \, dy
-3 \int_x^u y \widehat{F}(y) \widehat{F}'(y) \, dy \\
&+\frac{3}{4} \int_x^u y \widehat{F}(y)^2 \widehat{F}'(y) \, dy.
\end{split}
\end{equation}
Differentiating, we obtain
\begin{equation}
\label{annoying}
\begin{split}
0 =
&-\frac{3 \widehat{F}(x)}{4}-\frac{\widehat{F}(x)^3}{4}-\frac{15}{4} x \widehat{F}'(x)
+3 \widehat{F}(x) \widehat{F}'(x)-\frac{3}{4} x \widehat{F}(x)^2 \widehat{F}'(x) \\
&+\frac{3}{2} \widehat{F}(x) \left(\int_x^u y \widehat{F}'(y) \, dy\right) \widehat{F}'(x)
+\frac{3}{2} \left(\int_0^x y \widehat{F}(y) \widehat{F}'(y) \, dy\right) \widehat{F}'(x).
\end{split}
\end{equation}
The integral terms in the above equation are annoying.  To get rid of them, 
differentiate \eqref{annoying}, solve for $\int_x^u y \widehat{F}'(y) \, dy$, and use the result to 
replace the $\int_x^u y \widehat{F}'(y) \, dy$ terms in \eqref{annoying}.  Then take that equation, 
differentiate it, solve for $\int_0^x y \widehat{F}(y) \widehat{F}'(y) \, dy$, and replace the
$\int_0^x y \widehat{F}(y) \widehat{F}'(y) \, dy$ terms.  After doing this and simplifying we can 
find the differential equation 
\[
\begin{split}
0 = 
&-18 \widehat{F}(x) \widehat{F}'(x)^4-12 x \widehat{F}'(x)^5+21 \widehat{F}'(x)^2 \widehat{F}''(x)
+9 \widehat{F}(x)^2 \widehat{F}'(x)^2 \widehat{F}''(x) \\
& -9 \widehat{F}(x) \widehat{F}''(x)^2 
-3 \widehat{F}(x)^3 \widehat{F}''(x)^2+3 \widehat{F}(x) \widehat{F}'(x) \widehat{F}'''(x)
+\widehat{F}(x)^3 \widehat{F}'(x) \widehat{F}'''(x).
\end{split}
\]
It is possible to find a power series solution for $\widehat{F}$ using this differential equation.
When this is done, we find 
\begin{equation}
\label{power}
\begin{split}
\widehat{F}(x) = &a x+0.208333 a^3 x^3+0.0864583 a^5 x^5+0.0454861 a^7 x^7 \\
&+0.026982 a^9 x^9+0.0172103 a^{11} x^{11} +0.0115247 a^{13} x^{13} \\
&+0.00799137 a^{15} x^{15}+0.0056886 a^{17} x^{17} + \cdots
\end{split}
\end{equation}
where $a = \widehat{F}'(0)$.  This function is odd. 

Now we only need to find the unknown constants $a$ and $u$.  
Since $\widehat{F}'(0) \geq 0$, we know $a = \widehat{F}'(0) > 0$ or else $\widehat{F}$ would be the 
zero function.  All coefficients in $\widehat{F}$ are positive, so there is exactly one real root to the 
equation $\widehat{F}(x/a) = 1$; it is approximately $0.82742$.  The constant $u$ satisfies  
$\widehat{F}(u) = 1$, giving $u = 0.82742/a$.  Now we have a relationship between $u$ and $a$.

Taking $x = u$ in \eqref{first} yields
\begin{equation*}
0 = 1 - u - \frac{3}{2} \int_0^u y \widehat{F}(y) \widehat{F}'(y) \, dy.
\end{equation*}
Substituting \eqref{power} and $u = 0.82742/a$ into this expression, we find 
$0 = 1 - 1.27292/a$, implying that $a \approx 1.27292$.  This gives $u \approx .650021$. 
Putting everything together (using \eqref{odd} to simplify since $\widehat{F}$ is odd), we find the 
function $F$ given in the statement of the theorem. 
\end{proof}

The method used in Theorem \ref{4person} can be used to find the solution in Theorem 
\ref{3person}.  Hypothetically, the solution for $n$ person games for $n \geq 5$ can be worked out 
in a similar manner, but the calculations may prove to be unreasonable.  

\section{Numerical Approximations.}
\label{approx}

Since our derivations of the optimal solutions in Sections \ref{pir} and \ref{closest} were sufficiently 
complicated, we found it reassuring to find numerical support for our results.  In this section we describe an 
algorithm used to find an approximate solution for a discrete analogue of the three person ``Price Is Right'' 
guessing game.  Similar algorithms will work for the more players and the for the ``Closest Wins" guessing 
game.  Our idea is similar to that of fictitious play \cite{julia}, but we will not attempt to prove that our 
approximations converge to an optimal strategy.  

We make the game finite by letting the target number be uniformly chosen among the integers from $1$ to $N$ 
where $N$ reasonably large.  A mixed strategy is a probability vector $\mathbf{p} = (p_1,\ldots,p_N)$ where
$p_i$ gives the probability of selecting the number $i$.  

Let $a(i,j,k)$ be the payoff to the first player when the three guesses are $i,j$ and $k$. 
Let $\mathbf{p}$ be the current candidate for an optimal solution.  When player 1 guesses $i$ and 
the other players both use the mixed strategy $\mathbf{p}$, then the payoff to player 1, denoted by 
$v_i(\mathbf{p})$, is given by
\begin{equation*}
v_i(\mathbf{p}) = \sum_{j,k} a(i,j,k)p_j p_k.
\end{equation*}

If $\mathbf{p}$ is an optimal solution, then $v_i(\mathbf{p})=0$ for each $i$ such that $p_i > 0$. 
However, if $\mathbf{p}$ is not optimal, then $v_i(\mathbf{p}) > 0$ for some $i$, and so it would be 
advantageous to the other players to increase the probability that they guess those values $i$. 
So we increase $p_i$ by a quantity that is proportional to $v_i(\mathbf{p})$.  This is our new candidate for an
optimal solution.

Let $\varepsilon >0$ be a small constant of proportionality.  From $\mathbf{p}$ we compute the 
vector $\mathbf{r} = (r_1,\ldots,r_N)$ with $r_i = p_i + \varepsilon \max(0,v_i(\mathbf{p}))$. The 
components of $r$ are non-negative, but $r$ may not be a probability vector and so we rescale it be 
$\mathbf{p}' = \mathbf{r}/(r_1 + \cdots + r_n)$. 

For an initial $\mathbf{p}$ we use the uniform probability $p_i = 1/N$.  We found that if $\varepsilon$ is 
too large, then the plot of $p$ does not settle down, and if $\varepsilon$ is too small, then convergence 
is too slow. For the three player game and $N=50$, we found that with $\varepsilon = 0.001$ and running $5000$ 
iterations, we arrived at a very good approximation to the exact solution, whose graph is shown in Figure 3.  It nicely approximates the solution displayed in Figure 1. 
\begin{figure}[ht]
\begin{center}
\includegraphics{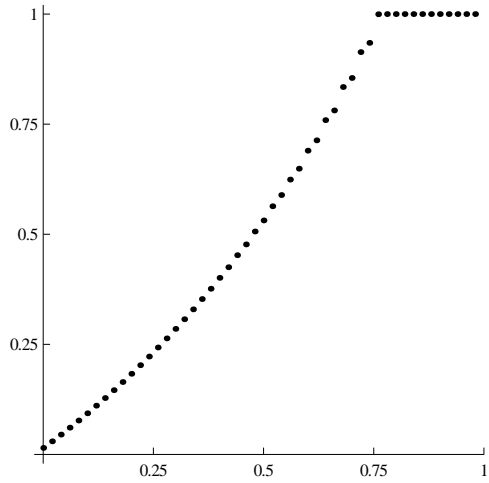}
\end{center}
\caption{An approximation of an optimal strategy for three players with $N=50$ and $5000$ iterations.}
\end{figure}

\bigskip

\noindent\textit{Mathematics Department, 
Cal Poly, San Luis Obispo, CA 93422 \\
aamendes@calpoly.edu}

\bigskip

\noindent\textit{American Institute of Mathematics, 360 Portage Avenue, Palo Alto, CA 94306 \\
kmorriso@calpoly.edu}

\end{document}